 \providecommand{\F}{\mathbb{F}}
\date{}
\newtheorem{lem}{Lemma}[section]
\newtheorem{thm}{Theorem}
\newtheorem{cor}{Corollary}
\newtheorem{rem}{Remark}
\newtheorem{exm}{Example}
\theoremstyle{remark}
\def\F {{\mathbb{F}}}
\def\Fs {{\mathbb{F}_{q^2}}}
\def\cP{{\mathcal P}}
\def\GRS{{\mathcal GRS}}
\def\ba{{\bf a}}
\def\bb{{\bf b}}
\def\bc{{\bf c}}
\def\bu{{\bf u}}
\def\bv{{\bf v}}
\def\bw{{\bf w}}
\def\bo{{\bf 0}}
\def\bi{{\bf 1}}
\def\bx{{\bf x}}
\def\Pin{{\infty}}
\def\a{{\alpha}}
\def\Ga{{\alpha}}
\def\Gb{{\beta}}
\def\beq{\begin{equation}}
\def\eeq{\end{equation}}
\begin{document}

\title{A Construction of New Quantum MDS Codes }

\author{Lingfei~Jin and Chaoping Xing
\thanks{The authors are with Division of Mathematical Sciences, School
of Physical and Mathematical Sciences, Nanyang Technological
University, Singapore 637371, Republic of Singapore
(email: \{lfjin,xingcp\}@ntu.edu.sg).}
 \thanks{The work was partially supported by the Singapore  Ministry of Education  Tier 1 grant 01/14.}
 }

\maketitle

\begin{abstract} It has been a great challenge to construct new quantum MDS codes. In particular, it is very hard to construct quantum MDS codes with relatively large minimum distance. So far, except for some sparse lengths, all known $q$-ary quantum MDS codes have minimum distance less than or equal to $q/2+1$. In the present paper, we provide a construction of  quantum MDS codes with minimum distance bigger than $q/2+1$. In particular, we show existence of $q$-ary quantum MDS codes  with length $n=q^2+1$ and minimum distance $d$ for any $d\le q+1$ except for $d=q$ (this result extends those given in \cite{Gu11,Jin1,KZ12}); and with length $(q^2+2)/3$ and minimum distance $d$ for any $d\le (2q+2)/3$ if $3|(q+1)$. Our method is through Hermitian self-orthogonal codes. The main idea of  constructing Hermitian self-orthogonal codes is based on the solvability in $\F_q$ of a system of homogenous equations over $\Fs$.
\end{abstract}

\section{Introduction}
The field of quantum error correction has experienced a great progress since the establishment of the connections between quantum codes and classical codes (see \cite{Cal Rai}). One of these connections shows that
the construction of  quantum codes
can be reduced to that of classical linear error-correcting codes with certain self-orthogonality properties (see \cite{Ash Kni,Cal
Rai, Ket Kla,Shor}). More precisely, we can construct  quantum codes as long as we can construct
classical linear codes with symplectic, Euclidean or Hermitian
self-orthogonality (see \cite{Aly Kla Sar,Che Lin Xin,Li Xin Wan2,Sar Kla,Stea3}, etc). The quantum codes obtained by using self-orthogonality are called stabilizer codes.

 For a prime power $q$, a $q$-ary $((n,K,d))$  quantum code is a $K$ -dimensional
vector subspace of the Hilbert space
$\left(\mathbb{C}^q\right)^{\otimes n}$ which can detect up to
$d-1$ quantum errors, or equivalently, correct up to
$\lfloor(d-1)/2\rfloor$ quantum errors. If we put $k=\log_q K$, we denote a $q$-ary $((n,K,d))$  quantum code by $[[n,k,d]]_q$ or simply $[[n,k,d]]$ if the $q$ is clear in the context.  Similar to the classical Singleton bound, the parameters of an $[[n,k,d]]_q$ quantum code have to satisfy the quantum Singleton bound:
$k\leq n-2d+2$. A quantum code achieving this quantum Singleton bound is called a
quantum maximum-distance-separable (MDS) code.

Just as in the classical case, quantum MDS codes form an optimal family of quantum codes. Constructing quantum MDS codes has become one of the central topics for quantum codes in the recent years. The length of a $q$-ary quantum stabilizer MDS code cannot exceed $q^2+1$ if we assume the classical MDS conjecture holds. In fact, several families of quantum MDS codes have been constructed in the past few years (see \cite{Bie Ede,Feng,Gra Bet2,Hu Tan,Li Xin Wan2,Sar Kla, Jin1,Gu11,KZ12}). The problem of constructing quantum MDS codes with $n\le q+1$ has been completely solved  through classical Euclidean self-orthogonal codes \cite{Gra Bet2, Gra2}. However, constructing quantum MDS codes with $n>q+1$ is a more challenging task. In particular, it is very hard to construct quantum MDS codes with relatively large minimum distance. So far, except for some sparse lengths $n$ such as $n=(q^2+1)/2$, $q^2$ and $q^2+1$, all known $q$-ary quantum MDS codes have minimum distance less than or equal to $q/2+1$ (see \cite{Bie Ede,Feng,Gra Bet2,Gra2,Hu Tan,Sar Kla,Gu11,KZ12}).

 In this paper, we construct some new quantum MDS codes with minimum distance bigger than $q/2+1$ through classical Hermitian self-orthogonal generalized Reed-Solomon codes. In particular, we show existence  of $q$-ary quantum MDS codes  with length $n=q^2+1$ and minimum distance $d$ for any $d\le q+1$ (this result extends those given in \cite{Gu11,Jin1,KZ12}); and with length $(q^2+2)/3$ and minimum distance $d$ for any $d\le (2q+2)/3$ if $3|(q+1)$.
\subsection*{Main result and comparison with previous constructions}
Previously, the known $q$-ary quantum MDS codes with lengths $n$ satisfying either $n=(q^2+1)/2$ , $q^2$ and $ q^2+1$ may have minimum distance bigger than $q/2+1$. More precisely, there exist $q$-ary $[[n,n-2d+2,d]]$  quantum MDS codes for the following $n$ and $d$:
\begin{itemize}
  \item[(i)] $n=q^2+1$ and $d=q+1$ (see \cite{Li Xin Wan2}); and $n=q^2+1$ and $ d\le q+1$ for even $q$ and odd $d$  (see \cite{Gu11}); and $n=q^2+1$ and $d\le q+1$ for $q\equiv 1\mod{4}$ and even $d$ (see \cite{KZ12}).
       \item[(ii)] $n= q^2$ and $d\leq q$ (see \cite{Gra Bet2} and \cite{Li Xin
  Wan2,Jin1}).
  \item[(iii)] $n=(q^2+1)/2$ and $q/2+1< d\le q$ for odd $q$ (see \cite{KZ12}).
\end{itemize}
Based on (ii) and (iii) above, by using a propagation rule \cite{FLX06}, one can obtain the following quantum MDS codes with minimum distance bigger than $q/2+1$.

\begin{itemize}
  \item[(iv)] $n=q^2-s$ and $q/2+1<d\le q-s$ for $0\le s<q/2-1$.
  \item[(v)] $n=(q^2+1)/2-s$ and $q/2+1<d\le q-s$ for $0\le s<q/2-1$.
\end{itemize}
Our paper demonstrates new $q$-ary quantum MDS codes with minimum distance bigger than $q/2+1$ for lengths $n=q^2+1$ and $n=r(q-1)+1$ with $q+1\equiv r\mod{2r}$. More specifically, we have the following main result in this paper.
\begin{thm}[Main Theorem]\label{thm:1.1} There exists $q$-ary $[[n,n-2d+2,d]]$ quantum MDS codes for the following $n$ and $d$.\begin{itemize}
\item[{\rm (i)}] $n=q^2+1$ and $d\le q-1$ or $d=q+1$.
  \item[{\rm (ii)}] $n=r(q-1)+1$ and $d\le (q+r+1)/2$ for $q\equiv r-1\mod{2r}$. In particular, if $3$ divides $q+1$, then there exists a $q$-ary quantum MDS codes of length $n=(q^2+2)/3$ and minimum distance $d$ for any $d\le (2q+2)/3$.
\end{itemize}
\end{thm}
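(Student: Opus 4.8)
The plan is to realize every code in the theorem as the image, under the Hermitian construction, of a Hermitian self-orthogonal generalized Reed--Solomon code over $\Fs$. If $C$ is an $[n,k]_{q^2}$ code with $C\subseteq C^{\ph}$, the Hermitian construction yields a quantum $[[n,n-2k,d]]_q$ code with $d\ge d(C^{\ph})$; when $C=\GRS_k(\ba,\bv)$ the dual $C^{\ph}$ is again a $\GRS$ code, hence MDS with $d(C^{\ph})=k+1$, and the quantum Singleton bound $n-2k\le n-2d+2$ then forces $d=k+1$, so the code is automatically MDS. Expanding $\langle \bc_f,\bc_g\rangle_H$ for $f,g$ of degree $<k$ and putting $u_i:=v_i^{q+1}\in\Fq^*$ (every element of $\Fq^*$ is a norm, so any nowhere-zero choice of the $u_i$ is realizable by some $\bv$), the self-orthogonality $\GRS_k(\ba,\bv)\subseteq\GRS_k(\ba,\bv)^{\ph}$ is equivalent to the homogeneous system
\beq
\sum_{i=1}^{n} u_i\,a_i^{\,s+qt}=0,\qquad 0\le s,t\le k-1,
\eeq
over $\Fs$, which I must solve with all $u_i\in\Fq^*$; conjugating the $(s,t)$-equation gives the $(t,s)$-equation, so the equations are far from independent. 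Writing $k=d-1$, the whole theorem reduces to solving this system for the prescribed $n$ and the largest admissible $k$: for ordinary (non-extended) $\GRS$ codes the inclusions $\GRS_{k'}(\ba,\bv)\subseteq\GRS_k(\ba,\bv)$ with $k'\le k$ then hand us all smaller distances for free, since Hermitian self-orthogonality passes to subcodes.

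For part (i) the only length-$(q^2+1)$ point set is $\Fs\cup\{\Pin\}$, and I would evaluate degree-$<k$ polynomials at the $q^2$ field elements while reading the coefficient of $x^{k-1}$ at $\Pin$. Using $\sum_{a\in\Fs}a^{\,e}=0$ for $0<e<q^2-1$, the uniform choice $u_a\equiv 1$ together with $u_{\Pin}=1$ disposes of the extremal case $k=q$ immediately: the unique surviving exponent $s+qt=q^2-1$ (at $s=t=q-1$) is exactly cancelled by the coordinate at $\Pin$. Because this coordinate depends on $k$, the subcode shortcut fails here, so for each $k<q$ I must instead produce a nowhere-zero $u$ with $\sum_{a\in\Fs}u_a a^{e}=0$ for every $e\in\{\,s+qt:0\le s,t\le k-1\,\}\setminus\{(k-1)(q+1)\}$; this is an $\Fq$-linear system of fewer than $q^2$ conditions in the $q^2$ unknowns $u_a$, so its solution space is nonzero and one seeks a point of it avoiding all coordinate hyperplanes.

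For part (ii) I would take $\ba$ to be the subgroup $G\le\Fs^*$ of order $N=r(q-1)$ (which exists since $r\mid q+1$) together with $0$, and try $u_i\equiv 1$ on $G$. As $\gcd(N,q)=1$ we have $\sum_{a\in G}a^{\,j}=N\neq0$ when $N\mid j$ and $0$ otherwise, so the $(0,0)$ equation fixes $u_0=-N\neq0$ while every other equation holds unless $N\mid(s+qt)$. Reducing modulo $q-1$ shows this forces $(q-1)\mid(s+t)$, and for $0\le s,t\le k-1$ with $k\le(q+r-1)/2$ the sole possibility is $s+t=q-1$, where $s+qt=(q-1)(t+1)$ and hence $N\mid(s+qt)$ iff $r\mid(t+1)$. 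The relevant $t$ fill a block of at most $r-1$ consecutive integers centred at $(q-1)/2$, and the hypothesis $q\equiv r-1\pmod{2r}$ is precisely what displaces this block off the residue class $t\equiv-1\pmod r$. Thus no further equation is violated, $\GRS_k(\ba,\bv)$ is Hermitian self-orthogonal, and the choice $r=(q+1)/3$ specializes to length $(q^2+2)/3$ with $d\le(2q+2)/3$.

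I expect the crux to be the nowhere-zero requirement on $(u_i)$, that is, guaranteeing a solution of the homogeneous system all of whose coordinates lie in $\Fq^*$ so that $\bv$ is a genuine $\GRS$ multiplier vector --- this is most acute for the $k<q$ cases of part (i) --- together with the purely arithmetic task of deciding which exponents $s+qt$ are divisible by the pertinent group order, which in part (ii) is exactly what the congruence $q\equiv r-1\pmod{2r}$ is engineered to control. The reduction to the power-sum system and the MDS-ness of the Hermitian dual are, by comparison, routine.
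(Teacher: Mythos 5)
Your overall framework is the same as the paper's: Hermitian self-orthogonal $\GRS$ codes pushed through the Hermitian construction (the paper's Lemma \ref{lem:4.1}), with quantum MDS-ness forced by the Singleton bound. Your part (ii) is correct and, despite the different packaging, is the paper's construction: your point set $\{0\}\cup G$ and coefficients ($u_a=1$ on $G$, $u_0=-N$) are exactly the unique-up-to-scalar solution underlying Example \ref{exm:2.3}, and your arithmetic ($N\mid s+qt$ forces $s+t=q-1$, hence $r\mid t+1$, which $q\equiv r-1\pmod{2r}$ rules out) is the same computation as Theorem \ref{thm:3.7}; you verify orthogonality by power sums where the paper routes through the row-equivalence criterion (Theorem \ref{thm:2.2}) and the containment lemma (Lemma \ref{lem:3.6}). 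Similarly, your part (i) at $k=q$ coincides with the paper's object (the all-ones vector is the solution of Example \ref{exm:2.5}). One caveat: for $r=q+1$ your step ``the sole possibility is $s+t=q-1$'' fails at $k=q$ (there $s=t=q-1$ gives $s+t=2(q-1)$ and the $(q-1,q-1)$ equation is violated), so, exactly like the paper (whose proof invokes Example \ref{exm:4.5}, i.e.\ $k\le q-1$), you only reach $d\le q$ at length $q^2$ rather than the claimed $d\le q+1$.

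The genuine gap is part (i) for $1\le k<q$, which you flag but do not close: you need $u\in(\Fq^*)^{q^2}$ with $\sum_{a}u_aa^{s+qt}=0$ for all $(s,t)\neq(k-1,k-1)$ in range, and additionally $\sum_a u_a a^{(k-1)(q+1)}\neq 0$ (this sum lies in $\Fq$ automatically), so that $u_{\Pin}:=-\sum_a u_a a^{(k-1)(q+1)}$ is a nonzero norm. A dimension count produces nonzero solutions of the linear system but not one avoiding all these hyperplanes, and such $u$ can genuinely fail to exist: for $q=2$, $k=1$ one needs a nowhere-zero $u\in\F_2^4$ with $\sum_a u_a\neq 0$, which is impossible since $u=\bi$ is forced and $\sum_a u_a=4=0$; equivalently, there is no Hermitian self-orthogonal $[5,1,5]$ code over $\F_4$, so Theorem \ref{thm:3.5} is itself false at $(q,k)=(2,1)$. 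Hence generic hyperplane-avoidance cannot finish this range; a structural choice of $u$ (and possibly restrictions on $q,k$) is unavoidable. You should also know that the paper's own treatment of this range is defective at precisely the spot you identified: the second inclusion in Corollary \ref{cor:3.2}(ii), $\GRS_{q(k-1)+1}(\ba,\bv^q,\Pin)\subseteq\GRS_{n+1-k}(\ba,\bv^{q},\Pin)$, is false because extended $\GRS$ codes of different dimensions are not nested (the $\Pin$-coordinate reads the coefficient of $x^{k-1}$ relative to the code's own dimension $k$); concretely, with the all-ones $\bv$ of Example \ref{exm:2.5}, the codeword of $x^{k-1}$ has Hermitian self-product $\sum_{\a\in\Fs}\a^{(k-1)(q+1)}+1=1\neq 0$ for every $1\le k<q$, so $\GRS_k(\ba,\bv,\Pin)$ is not Hermitian self-orthogonal there. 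In short: your proposal proves part (ii) (for $r<q+1$) and the $d=q+1$ case of part (i); the remaining cases of part (i) are not proved by you, nor, in fact, by the paper's argument as written.
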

\begin{proof} Part (i) is just Theorem \ref{thm:4.1}.

For Part (ii), we must have $r|(q+1)$. The case of $r=q+1$ follows from Example \ref{exm:4.5}. The case of $r<q+1$ follows from Theorem \ref{thm:4.2}.
\end{proof}

\begin{rem}{ \rm Using a propagation  rule \cite{FLX06}, we can derive more quantum MDS codes of other lengths from Theorem \ref{thm:1.1}(ii).
}
\end{rem}

\subsection*{Our techniques} Our idea of constructing quantum codes is through the construction of Hermitian self-orthogonal generalized Reed-Solomon (GRS or generalized RS for short) codes. However, unlike the construction of Euclidean self-orthogonal $\GRS$ codes, it is a big challenge to construct Hermitian self-orthogonal $\GRS$ codes due to the fact that $q$th power has to be raised for the Hermitian inner product. To solve this problem, the first step is to make sure that a certain system of homogenous equations over $\Fs$ has solutions over $\F_q$. We first provide a necessary and sufficient condition under which this system of homogenous equations over $\Fs$ has solutions over $\F_q$. The second step is to find some suitable evaluation points.

\subsection*{Organization of the paper}
In Section 2, we first study a system of homogenous equations over $\Fs$ and give a sufficient and necessary condition under which this system has a nonzero solution over $\F_q$. Afterwards, we show that these conditions are satisfied in some cases and consequently the system has nonzero solutions over $\F_q$ for theses cases. In Section 3, we introduce Hermitian self-orthogonal codes and provide some sufficient conditions under which a GRS code is Hermitian self-orthogonal. We also provide some examples of Hermitian self-orthogonal MDS codes. Finally, we present some quantum MDS codes based on Hermitian self-orthogonal MDS codes in Section 3.

\section{A system of equations}
Let us consider a system of equations over $\F_{q^2}$ given by
\begin{equation}\label{eq:2.1}
A\bx^T=\bo,
\end{equation}
where $A$ is an $(n-1)\times n$ matrix of rank $n-1$ over $\Fs$ and $T$ stands for transpose. One knows that (\ref{eq:2.1}) must have at least one nonzero solution over $\Fs$. However, for our application, we are curious about the question that  whether (\ref{eq:2.1}) has a nonzero solution over $\F_q$. In this section, we give some sufficient and necessary conditions under which (\ref{eq:2.1}) has a nonzero solution over $\F_q$.

\begin{lem}\label{lem:2.1} The equation (\ref{eq:2.1}) has a nonzero solution in $\F^n_q$ if and only if $\bc^q$ is a solution of (\ref{eq:2.1}) whenever $\bc$ is a solution of (\ref{eq:2.1}).
\end{lem}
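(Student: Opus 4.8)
The plan is to exploit the fact that, since $A$ has rank $n-1$, the solution space $V$ of (\ref{eq:2.1}) over $\Fs$ is a one-dimensional $\Fs$-subspace of $\Fs^n$. I would fix a nonzero solution $\bc\in\Fs^n$, so that $V=\{\lambda\bc:\lambda\in\Fs\}$, and study how the coordinatewise Frobenius $\bc\mapsto\bc^q=(c_1^q,\dots,c_n^q)$ interacts with $V$.

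For the forward implication, suppose (\ref{eq:2.1}) admits a nonzero solution $\bv\in\F^n_q$. Then $V=\{\lambda\bv:\lambda\in\Fs\}$, and for any solution $\bc=\lambda\bv$ we have $\bc^q=\lambda^q\bv^q=\lambda^q\bv$ because $\bv^q=\bv$. Hence $\bc^q\in V$ is again a solution, which is exactly the asserted closure property. This direction is immediate.

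The substance lies in the reverse implication. Assuming that $\bc^q$ is a solution whenever $\bc$ is, I apply this to the fixed generator $\bc$; one-dimensionality of $V$ then forces $\bc^q=\lambda\bc$ for some $\lambda\in\Fs^{*}$. Iterating the Frobenius once more and using $c_i^{q^2}=c_i$ gives $\bc=(\bc^q)^q=\lambda^q\bc^q=\lambda^{q+1}\bc$, so $\lambda^{q+1}=1$. I would then try to rescale: seek $\nu\in\Fs^{*}$ such that $\bv:=\nu\bc$ lies in $\F^n_q$, i.e. $\bv^q=\bv$. Since $\bv^q=\nu^q\bc^q=\nu^q\lambda\bc$, and $\bc\neq\bo$, this reduces to the scalar equation $\nu^q\lambda=\nu$, that is $\lambda=\nu^{1-q}$.

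The main obstacle is precisely the solvability of $\lambda=\nu^{1-q}$, and this is where the constraint $\lambda^{q+1}=1$ is used. The map $\nu\mapsto\nu^{1-q}$ on $\Fs^{*}$ has image equal to the kernel of the norm $N(\lambda)=\lambda^{q+1}$: both are the unique subgroup of the cyclic group $\Fs^{*}$ of order $(q^2-1)/(q-1)=q+1$, and the image is contained in the kernel because $(\nu^{1-q})^{q+1}=\nu^{1-q^2}=1$; equivalently, this is Hilbert's Theorem~90 for the cyclic extension $\Fs/\F_q$. Since $\lambda^{q+1}=1$, a suitable $\nu$ exists, and then $\bv=\nu\bc$ is a nonzero solution of (\ref{eq:2.1}) satisfying $\bv^q=\bv$, hence a nonzero element of $\F^n_q$. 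The only genuinely nontrivial point is this surjectivity of $\nu\mapsto\nu^{1-q}$ onto $\ker N$, which I would settle by the order count just described.
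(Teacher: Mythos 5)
Your proof is correct, but the substantive (reverse) direction is argued quite differently from the paper. The paper also begins by noting the solution space $V$ is a one-dimensional $\Fs$-line, but it never writes $\bc^q=\lambda\bc$; instead it forms the two ``trace-like'' combinations $\bw_1=\bc+\bc^q$ and $\bw_2=\a\bc+\a^q\bc^q$ for a basis $\{1,\a\}$ of $\Fs$ over $\F_q$. Both are solutions (the solution space is an $\Fs$-subspace containing $\bc$ and, by hypothesis, $\bc^q$), both are fixed by the Frobenius and hence lie in $\F_q^n$, and at least one is nonzero since the matrix $\left(\begin{smallmatrix}1&1\\ \a&\a^q\end{smallmatrix}\right)$ is invertible, so $\bw_1=\bw_2=\bo$ would force $\bc=\bo$. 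Your route instead extracts the eigenvalue $\lambda$ from one-dimensionality, shows $\lambda^{q+1}=1$, and rescales by a solution of $\lambda=\nu^{1-q}$, whose existence is Hilbert's Theorem 90 (or your order count of the image of $\nu\mapsto\nu^{1-q}$ versus the norm-one subgroup). The trade-off: your argument leans essentially on the rank-$(n-1)$ hypothesis (without it, $\bc^q=\lambda\bc$ fails), whereas the paper's reverse-direction argument is purely linear-algebraic, needs no Hilbert 90, and in fact works verbatim for a solution space of any dimension --- it shows in general that Frobenius-stability of the solution set yields a nonzero $\F_q$-rational solution. On the other hand, your proof is more structural: it identifies exactly which scalars $\lambda$ can occur and produces the $\F_q$-solution as an explicit rescaling of the given generator.
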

\begin{proof} If (\ref{eq:2.1}) has a nonzero solution $\bb$ in $\F^n_q$, then the solution space of (\ref{eq:2.1}) is $\Fs\cdot\bb=\{\a \bb:\; \alpha \in\Fs\}$ since the solution space has dimension $1$ over $\Fs$. Thus, for every solution $\lambda\bb$, we have $(\lambda\bb)^q=\lambda^q\bb\in \Fs\cdot\bb$.

Conversely, assume that $\bc^q$ is a  solution of (\ref{eq:2.1}) for a nonzero solution $\bc$  of (\ref{eq:2.1}). Choose a basis $\{1,\a\}$ of $\Fs$ over $\F_q$. Consider the two elements $\bw_1:=\bc+\bc^q$ and $\bw_2:=\a\bc+\a^q\bc^q$. It is clear that both $\bw_1$ and $\bw_2$ are solutions of (\ref{eq:2.1}) in $\F^n_q$. On the other hand, we have
\[\left(\begin{array}{c}
\bc \\
\bc^q\end{array} \right)=\left(\begin{array}{cc}
1 & 1 \\
\a & \a^q\end{array} \right)^{-1}\left(\begin{array}{c}
\bw_1 \\
\bw_2\end{array} \right).\]
This implies that one of $\bw_1$ and $\bw_2$ must be nonzero, otherwise $\bc$ is equal to zero. This completes the proof.
\end{proof}

The condition given in Lemma \ref{lem:2.1} can be converted to a condition on the coefficient matrix of the equation (\ref{eq:2.1}) as shown below.

\begin{thm}\label{thm:2.2} Let $A$ be the coefficient matrix of the equation (\ref{eq:2.1}). Then  the equation (\ref{eq:2.1}) has a nonzero solution in $\F^n_q$ if and only if $A^{(q)}$ and $A$ are row equivalent, where $A^{(q)}$ is obtained from $A$ by raising every entry to its $q$th power.
\end{thm}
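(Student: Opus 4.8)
The plan is to use Lemma~\ref{lem:2.1} to reduce the question about solutions over $\Fq$ to a statement about the null space of $A$, and then to translate that into the row-equivalence of $A$ and $A^{(q)}$. Since $A$ has rank $n-1$ and $n$ columns, the solution space of (\ref{eq:2.1}) over $\Fs$ is one-dimensional; I would fix a nonzero generator $\bc$, so that the null space of $A$ equals $\Fs\cdot\bc$. First I would apply the Frobenius map entrywise: raising the identity $A\bc^T=\bo$ to the $q$th power yields $A^{(q)}(\bc^q)^T=\bo$. Because $x\mapsto x^q$ is an automorphism of $\Fs$, it sends nonzero minors to nonzero minors and hence preserves rank, so $A^{(q)}$ also has rank $n-1$, and its null space is the one-dimensional space $\Fs\cdot\bc^q$.

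The key algebraic fact I would then invoke is that two matrices with the same number of columns are row equivalent precisely when they have the same row space, and that the row space is the orthogonal complement of the null space with respect to the standard bilinear form on $\Fs^n$. Consequently, $A$ and $A^{(q)}$ are row equivalent if and only if they share the same null space, that is, if and only if $\Fs\cdot\bc=\Fs\cdot\bc^q$, which in turn holds exactly when $\bc^q\in\Fs\cdot\bc$, i.e.\ when $\bc^q$ is itself a solution of (\ref{eq:2.1}).

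Finally, I would close the loop using Lemma~\ref{lem:2.1}. Since every solution is a scalar multiple $\lambda\bc$ and $(\lambda\bc)^q=\lambda^q\bc^q$, the hypothesis of Lemma~\ref{lem:2.1}---that the $q$th power of any solution is again a solution---is equivalent to the single condition that $\bc^q$ be a solution. Chaining the equivalences gives: (\ref{eq:2.1}) has a nonzero solution in $\F^n_q$ $\iff$ $\bc^q$ is a solution of (\ref{eq:2.1}) $\iff$ $A$ and $A^{(q)}$ are row equivalent, which is the desired statement.

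The main obstacle I anticipate is the careful handling of the duality step: one must verify that ``same null space'' and ``same row space'' are genuinely equivalent for these non-square matrices, and that passing through the orthogonal complement is valid over $\Fs$. Everything else---the preservation of rank under Frobenius and the one-dimensionality of the solution space---is routine once this duality is in place.
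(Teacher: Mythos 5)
Your proposal is correct and follows essentially the same route as the paper: apply the Frobenius map entrywise to identify the null space of $A^{(q)}$ with the set of $q$th powers of solutions of (\ref{eq:2.1}), invoke Lemma \ref{lem:2.1}, and identify ``same null space'' with ``row equivalent.'' The only difference is that you spell out the duality step (same null space $\iff$ same row space $\iff$ row equivalent, via rank and annihilators) and work with a fixed generator of the one-dimensional solution space, details the paper compresses into a single ``i.e.''
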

\begin{proof} It is easy to see that $\bc^q$ is a solution of $A^{(q)}\bx^T=\bo$ whenever $\bc$ is a solution of (\ref{eq:2.1}) and vice versa. By Lemma \ref{lem:2.1}, this implies that  the equation (\ref{eq:2.1}) has a nonzero solution in $\F^n_q$ if and only if the equation $A^{(q)}\bx^T=\bo$ and the equation (\ref{eq:2.1}) have the same solution space, i.e., $A^{(q)}$ and $A$ are row equivalent.
\end{proof}

For any distinct elements $\Ga_1,\dots,\Ga_n$ of $\Fs$, we denote by $A(\Ga_1,\dots,\Ga_n)$ the matrix
\[\left(\begin{array}{cccc}
1 &1&\dots&1\\
\a_1&\a_2&\dots&\a_n\\
\a_1^2&\a_2^2&\cdots&\a_n^2\\
\vdots&\vdots&\ddots&\vdots\\
\a_1^{n-2}&\a_2^{n-2}&\cdots&\a_n^{n-2}\end{array} \right).\]

Now we are going to illustrate Theorem \ref{thm:2.2} by some explicit examples. In these examples, as long as we choose suitable $A$, then $A$ and $A^{(q)}$ are row equivalent and thus (\ref{eq:2.1}) has a solution over $\F_q$.

\begin{exm}\label{exm:2.3}{\rm Let $m$ be a divisor of $q^2-1$ and let $n=m+1$. Let $\Ga_1,\dots,\Ga_m$ be all the $m$-th roots of unity.  We claim that the system $A(0,\Ga_1,\dots,\Ga_m)\bx=\bo$ has a nonzero solution in $\F_q^n$. To prove this, it is sufficient to show that the rows of $A^{(q)}(0,\Ga_1,\dots,\Ga_m)$ are a permutation of the rows of $A(0,\Ga_1,\dots,\Ga_m)$.
The first row of the two matrices are identical. Hence, it is sufficient to show that the last $n-2=m-1$ rows of $A^{(q)}(0,\Ga_1,\dots,\Ga_m)$ are  a permutation of those of $A(0,\Ga_1,\dots,\Ga_m)$. To see this, we notice that the powers in the  last $m-1$ rows of $A^{(q)}(0,\Ga_1,\dots,\Ga_m)$ consist of $\{1\cdot q,2\cdot q,\dots,(m-1)\cdot q\}$, while  the powers in the last $m-1$ rows of $A(0,\Ga_1,\dots,\Ga_m)$ consist of $\{1,2\dots,m-1\}$. Thus, the desired result follows from
 the fact that the set $\{1\cdot q\pmod{m},2\cdot q\pmod{m},\dots,(m-1)\cdot q\pmod{m}\}$ and  the set $\{1,2\dots,m-1\}$ are identical.
}
\end{exm}

\begin{exm}\label{exm:2.4}{\rm Let $\beta$ be an element of $\Fs$, then, for any distinct elements $\Ga_1,\dots,\Ga_n$ of $\F_q$, it is easy to see that the matrices $A(\Gb+\Ga_1,\dots,\Gb+\Ga_n)$ and  $A(\Ga_1,\dots,\Ga_n)$ are row equivalent. As we have that $A^{(q)}(\Gb+\Ga_1,\dots,\Gb+\Ga_n)=A((\Gb+\Ga_1)^q,\dots,(\Gb+\Ga_n)^q)=A(\Gb^q+\Ga_1,\dots,\Gb^q+\Ga_n)$, the matrices $A^{(q)}(\Gb+\Ga_1,\dots,\Gb+\Ga_n)$ and  $A(\Ga_1,\dots,\Ga_n)$ are also row equivalent. This means that $A(\Gb+\Ga_1,\dots,\Gb+\Ga_n)$ is row equivalent to $A^{(q)}(\Gb+\Ga_1,\dots,\Gb+\Ga_n)$. By Theorem \ref{thm:2.2},
the system $A(\Gb+\Ga_1,\dots,\Gb+\Ga_n)\bx^T=\bo$ has a nonzero solution in $\F_q^n$.
}
\end{exm}

For any distinct elements $\Ga_1,\dots,\Ga_{n}$ of $\Fs$, we denote by $A_{\Pin}(\Ga_1,\dots,\Ga_{n})$ the $n\times (n+1)$ matrix
\[\left(\begin{array}{ccccc}
1 &1&\dots&1&0\\
\a_1&\a_2&\dots&\a_{n}&0\\
\a_1^2&\a_2^2&\cdots&\a_{n}^2&0\\
\vdots&\vdots&\ddots&\vdots&\vdots\\
\a_1^{n-2}&\a_2^{n-2}&\cdots&\a_{n}^{n-2}&0\\
\a_1^{n-1}&\a_2^{n-1}&\cdots&\a_{n}^{n-1}&1
\end{array} \right).\]

\begin{exm}\label{exm:2.5}{\rm Put $n=q^2$. Let $\Ga_1=0$ and let $\Ga_2,\dots,\Ga_{n}$ be all the $(q^2-1)$-th roots of unity.  We claim that the system $A_{\Pin}(\Ga_1,\dots,\Ga_{n})\bx^T=\bo$ has a nonzero solution in $\F_q^{n+1}$. To prove this, it is sufficient to show that the rows of $A_{\Pin}^{(q)}(\Ga_1,\dots,\Ga_{n})$ are a permutation of the rows of $A_{\Pin}(\Ga_1,\dots,\Ga_{n})$.  The first and  last rows of the two matrices are identical. Hence, it is sufficient to show that the middle $n-2=q^2-2$ rows of $A_{\Pin}^{(q)}(\Ga_1,\dots,\Ga_{n})$ are  a permutation of those of $A_{\Pin}(\Ga_1,\dots,\Ga_{n})$. To see this, we notice that the powers in the  middle $q^2-2$ rows of $A_{\Pin}^{(q)}(\Ga_1,\dots,\Ga_{n})$ consist of $\{1\cdot q,2\cdot q,\dots,(q^2-2)\cdot q\}$, while  the powers in the middle $q^2-2$ rows of $A_{\Pin}(\Ga_1,\dots,\Ga_{n})$ consist of $\{1,2\dots,q^2-2\}$. Thus, the desired result follows from
 the fact that the set $\{1\cdot q\pmod{q^2-1},2\cdot q\pmod{q^2-1},\dots,(q^2-2)\cdot q\pmod{q^2-1}\}$ and  the set $\{1,2\dots,q^2-2\}$ are identical.
}
\end{exm}

\section{Hermitian Self-Orthogonal $\GRS$ Codes}
Generally it is harder to construct Hermitian self-orthogonal codes  than Euclidean self-orthogonal codes due to raising of $q$th power . In this section, we provide a sufficient condition under which a generalized Reed-Solomon code is Hermitian self-orthogonal.

Let us first briefly introduce two inner products, i.e., Euclidean and Hermitian inner products.  Let $\F_q$ be the finite field of $q$ elements.
For two vectors $\bb=(b_1,\dots,b_n)$ and $\bc=(c_1,\dots,c_n)$ in
$\F^n_{q^2}$, we define the Euclidean inner product $\langle\bb,\bc\rangle_E$ and Hermitian inner product $\langle\bb,\bc\rangle_H$  to
be  $\langle\bb,\bc\rangle_E=\sum_{i=1}^nb_ic_i$ and $\langle\bb,\bc\rangle_H=\sum_{i=1}^nb_ic_i^q$, respectively.

For a $q^2$-ary linear code $C$ of length $n$, the Euclidean dual
$C^{\perp_E}$ (and the Hermitian dual $C^{\perp_H}$, respectively) of $C$ consists of all the vectors
 of $\F^n_{q^2}$ which are orthogonal to the codewords of $C$ with
 respect to the above Euclidean inner product (and the above Hermitian inner product, respectively).

For a vector $\bv=(v_1,\dots,v_n)\in\F^n_{q^2}$, let
$\bv^q=(v_1^q,\dots,v_n^q)$. For a subset $S$ of
$\F^n_{q^2}$, we define $S^{q}$ to be the set $\{\bv^q:\; \bv\in S\}$.
Then it is easy to verify that $V^{q}$ is also a subspace of
$\F^n_{q^2}$ whenever $V$ is a subspace of $\F^n_{q^2}$. It is easy to see that for a
$q^2$-ary linear code $C$, we have $C^{\perp_H}=(C^q)^{\perp_E}$. Therefore, $C$
is Hermitian self-orthogonal if and only if $C\subseteq
(C^q)^{\perp_E}$, i.e., $C^q\subseteq C^{\perp_E}$.

Next we review some notations and results of generalized Reed-Solomon
codes. The reader may refer  to \cite{LX04} for the details. Choose $n$
distinct elements $\a_1,\dots,\a_n$ of $\Fs$, and $n$ nonzero
elements $v_1,\dots,v_n$ of $\Fs$.

For an integer $k$ between $1$ and $n$, we denote by $\cP_k$ the set of polynomials in $\Fs[x]$ with degree at most $k-1$.  Define the code
\[\GRS_k(\ba,\bv):=\{(v_1f(\a_1),\dots,v_nf(\a_n)):\;f(x)\in\cP_k\}, \]
 where $\ba$ and $\bv$ denote the vectors
$(\a_1,\dots,\a_n)$ and $(v_1,\dots,v_n)$, respectively. The code $\GRS_k(\ba,\bv)$ is called a generalized  Reed-Solomon code  over $\Fs$. It is well known that a $\GRS$ code is a MDS code, i.e., attaining the Singleton bound saying that $k+d\leq n+1$.

Furthermore, we can define the GRS code by including the point at ``infinity" as the follows
{\footnotesize \[\GRS_k(\ba,\bv,\Pin):=\{(v_1f(\a_1),\dots,v_nf(\a_n),v_{n+1}f(\Pin)):\;f(x)\in\cP_k\}, \]}
where $\bv=(v_1,\dots,v_{n+1})\in(\F_{q^2}^*)^{n+1}$ and $f(\Pin)$ stands for the coefficient of $x^{k-1}$ in the polynomial $f(x)$. Then $\GRS_k(\ba,\bv,\Pin)$ is an MDS code of length $n+1$ and dimension $k$.

The Euclidean dual $\GRS_k(\ba,\bv)^{\perp_E}$  of $\GRS_k(\ba,\bv)$ is the
generalized RS code $\GRS_{n-k}(\ba,\bu)$, where $\bu$ is $(v_1^{-1}c_1,\dots, v_n^{-1}c_n)$ for a
nonzero solution $\bc=(c_1,\dots,c_n)$ of the system $A(\Ga_1,\dots,\Ga_n)\bx^T=\bo$.

In the same way, one can easily prove that the Euclidean dual $\GRS_k(\ba,\bv,\Pin)^{\perp_E}$  of $\GRS_k(\ba,\bv,\Pin)$ is the
generalized RS code $\GRS_{n+1-k}(\ba,\bu,\Pin)$, where $\bu$ is equal to $(v_1^{-1}c_1,\dots, v_{n+1}^{-1}c_{n+1})$ for a
nonzero solution $\bc=(c_1,\dots,c_{n+1})$ of the system $A_{\Pin}(\Ga_1,\dots,\Ga_n)\bx^T=\bo$.

Note that the solution space of the above system has dimension
$1$ since any $n-1$ columns ($n$ columns, respectively) of the above coefficient matrix form an invertible $(n-1)\times(n-1)$ matrix ($n\times n$ matrix, respectively). Thus, every nonzero solution has all coordinates not equal to
zero. Furthermore, the vector $\bu$ is uniquely determined up to a nonzero scalar.

\begin{lem}\label{lem:3.1} {\rm (i)}
Assume that the system $A(\Ga_1,\dots,\Ga_n)\bx^T=\bo$ has a nonzero solution $(c_1,\dots,c_n)\in\F_q^n$. Let $v_i^{q+1}=c_i$ for some $v_i\in\Fs$, $i=1,\dots,n$ (note that such $v_i$ always exist since $c_i\in\F_q$). If  $\GRS_k(\ba,\bv)^q\subseteq \GRS_{n-k}(\ba,\bv^q)$, then $\GRS_k(\ba,\bv)$ is Hermitian self-orthogonal.

{\rm (ii)} Assume that the system $A_{\Pin}(\Ga_1,\dots,\Ga_n)\bx^T=\bo$ has a nonzero solution $(c_1,\dots,c_n,c_{n+1})\in\F_q^{n+1}$. Let $v_i^{q+1}=c_i$ for some $v_i\in\Fs$, $i=1,\dots,n+1$ (note that such $v_i$ exist since $c_i\in\F_q$). If  $\GRS_k(\ba,\bv,\Pin)^q\subseteq \GRS_{n-k}(\ba,\bv^q,\Pin)$, then $\GRS_k(\ba,\bv,\Pin)$ is Hermitian self-orthogonal.
\end{lem}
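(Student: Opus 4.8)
The plan is to reduce both parts to the Hermitian/Euclidean criterion recalled immediately before the lemma, namely that a $q^2$-ary linear code $C$ is Hermitian self-orthogonal if and only if $C^q\subseteq C^{\perp_E}$. Once this is in hand, the entire argument hinges on computing the Euclidean dual explicitly and observing that the normalization $v_i^{q+1}=c_i$ is precisely what is needed to make that dual equal to (or contain) the code $\GRS_{n-k}(\ba,\bv^q)$ appearing in the hypothesis. I would first record that the required $v_i$ exist and are nonzero: the norm map $x\mapsto x^{q+1}$ sends $\Fs^*$ onto $\F_q^*$, and each $c_i$ lies in $\F_q^*$ because, as noted earlier, every nonzero solution of the system has all coordinates nonzero.

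For part (i), I would invoke the Euclidean-dual formula stated above: since $\bc=(c_1,\dots,c_n)$ is a nonzero solution of $A(\Ga_1,\dots,\Ga_n)\bx^T=\bo$, we have $\GRS_k(\ba,\bv)^{\perp_E}=\GRS_{n-k}(\ba,\bu)$ with $\bu=(v_1^{-1}c_1,\dots,v_n^{-1}c_n)$. The one genuine idea in the proof is then the substitution $c_i=v_i^{q+1}$, which gives $v_i^{-1}c_i=v_i^{-1}v_i^{q+1}=v_i^q$, so that $\bu=\bv^q$ and hence $\GRS_k(\ba,\bv)^{\perp_E}=\GRS_{n-k}(\ba,\bv^q)$. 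At this point the hypothesis $\GRS_k(\ba,\bv)^q\subseteq\GRS_{n-k}(\ba,\bv^q)$ is literally the assertion $C^q\subseteq C^{\perp_E}$, and Hermitian self-orthogonality follows from the criterion.

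Part (ii) proceeds along identical lines. The same norm substitution yields $\bu=\bv^q$, so the dual formula for the ``infinity'' version gives $\GRS_k(\ba,\bv,\Pin)^{\perp_E}=\GRS_{n+1-k}(\ba,\bv^q,\Pin)$. The only extra point is a dimension bookkeeping step: the hypothesis places $\GRS_k(\ba,\bv,\Pin)^q$ inside $\GRS_{n-k}(\ba,\bv^q,\Pin)$, whose dimension is one less than that of the dual. Since $\cP_{n-k}\subseteq\cP_{n+1-k}$, these two codes are nested, i.e. $\GRS_{n-k}(\ba,\bv^q,\Pin)\subseteq\GRS_{n+1-k}(\ba,\bv^q,\Pin)$, so the hypothesis still forces $C^q\subseteq C^{\perp_E}$ and the criterion again applies.

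I do not expect a real obstacle inside this lemma: once the identity $v_i^{-1}c_i=v_i^q$ is spotted, both parts are immediate consequences of the Euclidean-dual formula and the Hermitian self-orthogonality criterion already established. The actual difficulty of the construction is pushed entirely into verifying the containment hypotheses $\GRS_k(\ba,\bv)^q\subseteq\GRS_{n-k}(\ba,\bv^q)$ (and its $\Pin$-analogue) for concrete choices of evaluation points, which is the business of the later theorems rather than of this lemma.
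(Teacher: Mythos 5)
Your part (i) is correct and is essentially the paper's own proof: the substitution $v_i^{-1}c_i=v_i^{-1}v_i^{q+1}=v_i^{q}$ turns the Euclidean-dual formula $\GRS_k(\ba,\bv)^{\perp_E}=\GRS_{n-k}(\ba,\bu)$ into $\GRS_k(\ba,\bv)^{\perp_E}=\GRS_{n-k}(\ba,\bv^q)$, and the hypothesis then reads $C^q\subseteq C^{\perp_E}$, which is exactly the Hermitian self-orthogonality criterion recalled before the lemma. Your preliminary remarks (surjectivity of the norm map, nonvanishing of the $c_i$) also match the paper's.

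Part (ii), however, contains a genuine error. You correctly observe that the Euclidean dual is $\GRS_{n+1-k}(\ba,\bv^q,\Pin)$ while the stated hypothesis involves $\GRS_{n-k}(\ba,\bv^q,\Pin)$, but your bridge --- ``since $\cP_{n-k}\subseteq\cP_{n+1-k}$, these two codes are nested'' --- is false. Nesting by degree holds for ordinary GRS codes but fails for the $\Pin$-versions, because the last coordinate of a codeword of $\GRS_j(\ba,\bw,\Pin)$ records the coefficient of $x^{j-1}$ of the evaluated polynomial, and \emph{which} coefficient is recorded depends on $j$. Concretely, take $f\in\cP_{n-k}$ whose coefficient of $x^{n-k-1}$ is nonzero. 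If its codeword in $\GRS_{n-k}(\ba,\bw,\Pin)$ were a codeword of $\GRS_{n+1-k}(\ba,\bw,\Pin)$, say given by $g\in\cP_{n+1-k}$, then (all $w_i$ being nonzero) $f-g$ would vanish at the $n$ distinct points $\Ga_1,\dots,\Ga_n$ while having degree at most $n-k<n$, forcing $g=f$; but then the last coordinates disagree, since the coefficient of $x^{n-k}$ in $f$ is $0$ whereas the coefficient of $x^{n-k-1}$ in $f$ is not. Hence $\GRS_{n-k}(\ba,\bw,\Pin)\not\subseteq\GRS_{n+1-k}(\ba,\bw,\Pin)$ whenever $1\le k$ and $n-k\ge 1$, i.e., in every case of interest. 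The index $n-k$ in the statement of (ii) is best read as a misprint for $n+1-k$: that reading is the one consistent with the dual formula stated just before the lemma and with how the lemma is actually invoked in the proof of Corollary 3.2(ii), whose chain of inclusions ends in $\GRS_{n+1-k}(\ba,\bv^{q},\Pin)$. Under that reading, part (ii) is proved word-for-word as part (i) --- which is what the paper means by ``very similar \dots we skip it'' --- and no nesting step is needed; under the literal reading, your argument does not establish the conclusion.
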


\begin{proof} Let $\bv=(v_1,\dots,v_n)$. Then $\bv^q=(v_1^q,\dots,v_n^q)=(v_1^{-1}c_1,\dots,v_n^{-1}c_n)$. Hence, $\GRS_{n-k}(\ba,\bv^q)=\GRS_k(\ba,\bv)^{\perp_E}$. By the given condition, we have
the following inclusion $\GRS_k(\ba,\bv)^q
\subseteq\GRS_{n-k}(\ba,\bv^q)=\GRS_k(\ba,\bv)^{\perp_E}$.
 The desired result follows.

 The proof of the second part is very similar to the first part and we skip it.
\end{proof}

Based on Lemma \ref{lem:3.1}, we have the following simple result.
\begin{cor}\label{cor:3.2} {\rm (i)} Assume that the system $A(\Ga_1,\dots,\Ga_n)\bx^T=\bo$ has a nonzero solution $(c_1,\dots,c_n)\in\F_q^n$. Let $v_i^{q+1}=c_i$ for some $v_i\in\Fs$, $i=1,\dots,n$. If $q(k-1)+1\le n-k$, i.e., $k\le (n+q-1)/(q+1)$, then $\GRS_k(\ba,\bv)$ is Hermitian self-orthogonal.

{\rm (ii)} Assume that the system $A_{\Pin}(\Ga_1,\dots,\Ga_n)\bx^T=\bo$ has a nonzero solution $(c_1,\dots,c_n,c_{n+1})\in\F_q^{n+1}$. Let $v_i^{q+1}=c_i$ for some $v_i\in\Fs$, $i=1,\dots,n+1$. If $q(k-1)+1= n+1-k$, i.e., $k= (n+q)/(q+1)$, then $\GRS_k(\ba,\bv,\Pin)$ is Hermitian self-orthogonal.
\end{cor}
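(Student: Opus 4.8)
The plan is to deduce Corollary~\ref{cor:3.2} directly from Lemma~\ref{lem:3.1} by verifying the containment hypothesis of that lemma under the stated degree condition. By Lemma~\ref{lem:3.1}(i), it suffices to prove that $\GRS_k(\ba,\bv)^q\subseteq\GRS_{n-k}(\ba,\bv^q)$ whenever $q(k-1)+1\le n-k$. First I would take a generic codeword of $\GRS_k(\ba,\bv)$, namely $\bw=(v_1f(\a_1),\dots,v_nf(\a_n))$ for some $f\in\cP_k$, and examine its image under the coordinatewise $q$th power map. Since the $v_i,\a_i$ lie in $\Fs$ and the Frobenius $x\mapsto x^q$ is a field automorphism fixing $\Fq$, I would compute
\[
\bw^q=(v_1^q f(\a_1)^q,\dots,v_n^q f(\a_n)^q)=(v_1^q f^{(q)}(\a_1^q),\dots,v_n^q f^{(q)}(\a_n^q)),
\]
where $f^{(q)}$ denotes the polynomial obtained by raising every coefficient of $f$ to the $q$th power.

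The key observation is that $f(\a_i)^q=g(\a_i)$ for a single polynomial $g(x):=f^{(q)}(x^q)$, and this $g$ has degree at most $q(k-1)$ in $x$. Thus $\bw^q=(v_1^q g(\a_1),\dots,v_n^q g(\a_n))$, which is exactly a codeword of $\GRS_{q(k-1)+1}(\ba,\bv^q)$ by the definition of a generalized Reed--Solomon code. Here I would use the degree hypothesis: the condition $q(k-1)+1\le n-k$ guarantees that $\cP_{q(k-1)+1}\subseteq\cP_{n-k}$, so the polynomial $g$ also lies in $\cP_{n-k}$, and hence $\bw^q\in\GRS_{n-k}(\ba,\bv^q)$. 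Since $\bw$ was arbitrary, this yields the desired inclusion $\GRS_k(\ba,\bv)^q\subseteq\GRS_{n-k}(\ba,\bv^q)$, and Lemma~\ref{lem:3.1}(i) then gives that $\GRS_k(\ba,\bv)$ is Hermitian self-orthogonal. The equivalence $q(k-1)+1\le n-k\iff k\le(n+q-1)/(q+1)$ is a routine rearrangement.

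For part (ii) I would argue in the same spirit, but with care at the point $\Pin$. A codeword of $\GRS_k(\ba,\bv,\Pin)$ has an extra coordinate $v_{n+1}f(\Pin)$, where $f(\Pin)$ is the coefficient of $x^{k-1}$ in $f$. Raising to the $q$th power sends this coordinate to $v_{n+1}^q\big(f(\Pin)\big)^q$, and I would need to check that this matches the infinity coordinate of the corresponding codeword in $\GRS_{n+1-k}(\ba,\bv^q,\Pin)$, namely the scaled coefficient of $x^{(n+1-k)-1}=x^{n-k}$ in $g$. The main subtlety to watch is that the infinity evaluation reads off the \emph{top} coefficient of the ambient polynomial space, so one must verify that the degree-$q(k-1)$ polynomial $g$, when regarded as an element of $\cP_{n+1-k}$, has its coefficient of $x^{n-k}$ equal to $v_{n+1}^{q}(f(\Pin))^q$ up to the appropriate identification; the inequality $q(k-1)+1\le n+1-k$ is precisely what keeps $g$ inside $\cP_{n+1-k}$ and aligns these two top coefficients. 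Once this bookkeeping at $\Pin$ is settled, Lemma~\ref{lem:3.1}(ii) applies verbatim.

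The routine steps are the Frobenius computation and the degree comparison; I expect the only real obstacle to be the correct treatment of the infinity coordinate in part (ii), where the evaluation ``$f(\Pin)$'' is defined as the leading coefficient relative to a dimension that changes from $k$ to $n+1-k$, so the power map does not act coordinatewise in the naive way and the degree bound must be invoked to make the identification consistent.
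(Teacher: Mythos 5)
Part (i) of your proposal is correct and is exactly the paper's own argument: your Frobenius computation (writing $f(\a_i)^q=g(\a_i)$ with $g(x)=f^{(q)}(x^q)$, $\deg g\le q(k-1)$) is precisely the content of the first inclusion in the paper's chain $\GRS_k(\ba,\bv)^q\subseteq\GRS_{q(k-1)+1}(\ba,\bv^q)\subseteq\GRS_{n-k}(\ba,\bv^q)$, after which Lemma \ref{lem:3.1}(i) finishes the proof.

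Part (ii) has a genuine gap, and it sits exactly at the point you flagged and then dismissed. The inequality $q(k-1)+1\le n+1-k$ does \emph{not} ``align the two top coefficients''; when the inequality is strict it forces them apart. Suppose $\deg f=k-1$, so the last coordinate of $\bw^q$ is $v_{n+1}^q\bigl(f(\Pin)\bigr)^q\neq 0$. Any codeword of $\GRS_{n+1-k}(\ba,\bv^q,\Pin)$ whose first $n$ coordinates equal $(v_1^qg(\a_1),\dots,v_n^qg(\a_n))$ must arise from the polynomial $g$ itself, because a polynomial of degree at most $n-k<n$ is determined by its values at the $n$ distinct points $\a_1,\dots,\a_n$; its infinity coordinate is therefore $v_{n+1}^q$ times the coefficient of $x^{n-k}$ in $g$, and this coefficient is $0$ since $\deg g\le q(k-1)<n-k$. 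Hence $\bw^q\notin\GRS_{n+1-k}(\ba,\bv^q,\Pin)$, and Lemma \ref{lem:3.1}(ii) cannot be invoked. The two coefficients agree only in the boundary case $q(k-1)+1=n+1-k$, i.e.\ $k=(n+q)/(q+1)$ exactly (for $n=q^2$ this is $k=q$, the case that drives Theorem \ref{thm:3.5}). Moreover this is not repairable bookkeeping: statement (ii) itself is false under strict inequality. Take $q=2$, $n=4$, $\{\a_1,\dots,\a_4\}=\F_4$; then $\bc=(1,1,1,1,1)$ solves $A_{\Pin}(\a_1,\dots,\a_4)\bx^T=\bo$, one may take $\bv=\bc$, and for $k=1$ the code $\GRS_1(\ba,\bv,\Pin)$ is spanned by the all-one vector of length $5$, whose Hermitian self-inner product is $5=1\neq 0$ in characteristic $2$, so it is not Hermitian self-orthogonal even though $k=1<(n+q)/(q+1)=2$. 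You should know that the paper's own proof of (ii) commits the same error: its second inclusion $\GRS_{q(k-1)+1}(\ba,\bv^q,\Pin)\subseteq\GRS_{n+1-k}(\ba,\bv^q,\Pin)$ silently assumes the symbol $\Pin$ reads off the same coefficient in both codes, which it does not. So your proposal faithfully reproduces the paper's reasoning, but the ``only real obstacle'' you identified is a real one, and outside the equality case it is fatal to the argument and to the statement alike.
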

\begin{proof} (i)
The desired result follows from  the following inclusions
\[\GRS_k(\ba,\bv)^q\subseteq \GRS_{q(k-1)+1}(\ba,\bv^q)
\subseteq\GRS_{n-k}(\ba,\bv^{q})\]
and Lemma \ref{lem:3.1}(i).

(ii) The second part follows from the inclusion
\small \[\GRS_k(\ba,\bv,\Pin)^q\subseteq \GRS_{q(k-1)+1}(\ba,\bv^q,\Pin)\]
and Lemma \ref{lem:3.1}(ii).
\end{proof}

\begin{exm}\label{exm:3.3}{\rm Let $n=q^2$. Let $\Fs=\{\Ga_1,\dots,\Ga_n\}$. By Example \ref{exm:2.3}, there is a nonzero solution $\bc=(c_1,\dots,c_n)\in\F_q^n$ of the system $A(\Ga_1,\dots,\Ga_n)\bx^T=\bo$. Let $v_i\in\Fs$ such that $v_i^{q+1}=c_i$ for every $1\le i\le n$. By Corollary \ref{cor:3.2}, $\GRS_k(\ba,\bv)$ is Hermitian self-orthogonal for any  $k\le q-1$. This result was obtained in \cite{Jin1}.
}\end{exm}

\begin{exm}\label{exm:3.4}{\rm Let $n=\sum_{j=1}^tn_j$ for some positive integers $n_j\le q$ and $t\le q$. Write $\F_q=\{\Ga_1,\dots,\Ga_q\}$ and  $\Fs$ into the union of additive cosets $\cup_{j=1}^q(\Gb_j+\F_q)$ for some $\Gb_j\in\Fs$.  By Example \ref{exm:2.4}, for each $j$, there exists a nonzero solution $\bc_j=(c_{j1},\dots,c_{jn_j})\in\F_q^{n_j}$ of $A(\Gb_j+\Ga_1,\dots,\Gb_j+\Ga_{n_j})\bx^T=\bo$. Let $v_{ji}\in\Fs$ such that $v_{ji}^{q+1}=c_{ji}$ for all $1\le j\le t$ and $1\le i\le n_j$. Then it is easy to see that  $\GRS_k(\bb_j,\bv_j)^q=\GRS_k(\ba_j,\bv_j^q)$, where $\bb_j=(\Gb_j+\Ga_1,\dots,\Gb_j+\Ga_{n_j})$, $\ba_j=(\Ga_1,\dots,\Ga_{n_j})$ and $\bv_j=(v_{j1},\dots,v_{jn_j})$. Therefore, by Lemma \ref{lem:3.1}, $\GRS_k(\bb_j,\bv_j)$ is Hermitian self-orthogonal for any $k\le n_j/2$. Combining all codes $\GRS_k(\bb_j,\bv_j)$, we obtain a Hermitian self-orthogonal code  $\GRS_k(\bb,\bv)$ of length $n$, where $\bb=(\bb_1,\dots,\bb_t)$ and $\bv=(\bv_1,\dots,\bv_t)$ for any $k\le \min\{n_1,\dots,n_t\}/2+1$. This code was constructed in \cite{Jin1} as well.
}\end{exm}

\begin{rem}\label{rem:3.5}{\rm
 All other  Hermitian self-orthogonal GRS codes given in \cite{Jin1} can be obtained using Lemmas \ref{lem:2.1} and \ref{lem:3.1}.
 }\end{rem}

The next theorem shows existence of $q^2$-ary Hermitian self-orthogonal MDS codes of length $q^2+1$ and dimension up to $q$.

\begin{thm}\label{thm:3.5} There exists a  $q^2$-ary Hermitian self-orthogonal MDS code of length $q^2+1$ and dimension $q$.
\end{thm}
\begin{proof} Let $n=q^2$ and let $\Ga_1,\dots,\Ga_n$ be all $q^2$ elements of $\Fs$. By Example \ref{exm:2.5}, there is a nonzero solution $\bc=(c_1,\dots,c_{n+1})\in\F_q^{n+1}$ of the system $A_{\Pin}(\Ga_1,\dots,\Ga_n)\bx^T=\bo$. Let $v_i\in\Fs$ such that $v_i^{q+1}=c_i$ for every $1\le i\le n+1$. By Corollary \ref{cor:3.2}(ii), $\GRS_q(\ba,\bv,\Pin)$ is Hermitian self-orthogonal.
\end{proof}

In the following lemma, we give a sufficient condition for inclusion of certain GRS codes.
\begin{lem}\label{lem:3.6} Let  $\Ga_1,\dots,\Ga_n$ be $n$ distinct roots of $x^n-x$ in $\Fs$ and put $\ba=(\Ga_1,\dots,\Ga_n)$. If the set $\{q i\pmod{n-1}:\; 1\le i\le k-1\}$ is a subset of $\{0,1,\dots,n-k-1\}$, then $\GRS_k(\ba,\bv)^q\subseteq \GRS_{n-k}(\ba,\bv^q)$ for any $\bv\in (\F_{q^2}^*)^n$.
\end{lem}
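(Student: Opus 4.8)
The plan is to unwind both GRS codes coordinatewise, reduce the claimed inclusion to a purely polynomial interpolation statement, and then exploit the crucial hypothesis that the $\Ga_i$ are \emph{all} the roots of $x^n-x$ to control degrees via reduction modulo $x^n-x$.

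First I would rewrite the inclusion coordinatewise. A generic element of $\GRS_k(\ba,\bv)^q$ has the form $(v_1^q f(\Ga_1)^q,\dots,v_n^q f(\Ga_n)^q)$ with $f\in\cP_k$, while a generic element of $\GRS_{n-k}(\ba,\bv^q)$ has the form $(v_1^q g(\Ga_1),\dots,v_n^q g(\Ga_n))$ with $g\in\cP_{n-k}$. Since every $v_i^q\neq 0$, the desired inclusion holds if and only if for each $f\in\cP_k$ there is some $g\in\cP_{n-k}$ with $f(\Ga_i)^q=g(\Ga_i)$ for all $i=1,\dots,n$. Writing $f(x)=\sum_{j=0}^{k-1}a_jx^j$ and raising to the $q$th power gives $f(\Ga_i)^q=\sum_{j=0}^{k-1}a_j^q\Ga_i^{qj}=h(\Ga_i)$, where $h(x):=\sum_{j=0}^{k-1}a_j^qx^{qj}$. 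So it suffices to produce, from $h$, a polynomial $g\in\cP_{n-k}$ agreeing with $h$ at every $\Ga_i$.

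The key idea is that the $\Ga_i$ are exactly the $n$ distinct roots of $x^n-x$, so evaluation at them gives a ring isomorphism $\Fs[x]/(x^n-x)\cong\Fs^n$; hence two polynomials agree on all $\Ga_i$ precisely when they are congruent modulo $x^n-x$. I would therefore take $g$ to be the canonical remainder of $h$ modulo $x^n-x$, which is the unique representative of degree $<n$, and then prove $\deg g\le n-k-1$. For this I reduce each monomial $x^{qj}$ using $x^n\equiv x\pmod{x^n-x}$: the constant term $j=0$ stays in degree $0$, while for $1\le j\le k-1$ one checks by iterating $x^n\equiv x$ that $x^{qj}$ reduces to a single monomial $x^{r_j}$, where $r_j$ is the residue of $qj$ modulo $n-1$. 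By hypothesis each such residue lies in $\{0,1,\dots,n-k-1\}$, so every reduced monomial has degree at most $n-k-1$; summing, $\deg g\le n-k-1$, that is $g\in\cP_{n-k}$, which is exactly what is required.

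I expect the delicate point to be the degree bookkeeping in this reduction rather than the ring-theoretic passage. Specifically, the reduction $x^{qj}\equiv x^{r_j}$ must be justified as an identity of \emph{functions on the roots}, and the presence of the root $0$ forces one to be careful about the correct representative of the residue class (the top exponent $n-1$ versus $0$); this is precisely the step at which the congruence class of $qj$ modulo $n-1$ enters and where the stated containment of the residue set is used. Once this monomial-by-monomial degree control is in place the conclusion is immediate, and the computation runs in parallel to the residue argument already carried out in Example~\ref{exm:2.3}.
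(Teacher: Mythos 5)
Your proposal is correct and takes essentially the same route as the paper's proof: both arguments reduce the inclusion to the monomials $x^{qj}$, $1\le j\le k-1$, and use the relation $x^n\equiv x$ on the evaluation set to replace the exponent $qj$ by its residue $\ell$ modulo $n-1$, which the hypothesis places in $\{0,1,\dots,n-k-1\}$ (the paper states this as the one-line identity $(v_i^q\Ga_i^{qj})_i=(v_i^q\Ga_i^{\ell})_i$, which is exactly your reduction modulo $x^n-x$ plus interpolation uniqueness). The representative subtlety you flag (residue $0$ versus exponent $n-1$, caused by the root $0$) is also silently assumed away in the paper's proof, and it is in fact vacuous: the hypothesis forces $(n-1)\mid(q^2-1)$, hence $\gcd(q,n-1)=1$, so $qj\equiv 0\pmod{n-1}$ cannot occur for $1\le j\le k-1$.
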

\begin{proof} It is clear that $\GRS_k(\ba,\bv)^q$ has a basis $\{(v_1^q,\dots,v_n^q)\}\cup\{(v_1^q\Ga_1^{qi},\dots,v_n^q\Ga_n^{qi})\}_{i=1}^{k-1}$. Hence, it is sufficient to show that $(v_1^q\Ga_1^{qi},\dots,v_n^q\Ga_n^{qi})$ belongs to $\GRS_{n-k}(\ba,\bv^q)$ for all $1\le i\le k-1$. Assume that $qj\pmod{n-1}= \ell\in\{1,\dots,n-k-1\}$. Then we have $(v_1^q\Ga_1^{qi},\dots,v_n^q\Ga_n^{qi})=(v_1^q\Ga_1^{\ell},\dots,v_n^q\Ga_n^{\ell})\in  \GRS_{n-k}(\ba,\bv^q)$. This completes the proof.
\end{proof}

\begin{thm}\label{thm:3.7}
Let $r$ be an integer satisfying $0\leq r< q+1$ and $q+1\equiv r\mod 2r$.  Then for any $k\leq (q-1+r)/2$, there exists a $q^2$-ary Hermitian self-orthogonal GRS code of length $n=r(q-1)+1$ and dimension $k$.
\end{thm}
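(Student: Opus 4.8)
The plan is to build the code as a Hermitian self-orthogonal $\GRS$ code on evaluation points taken from the roots of $x^n-x$, so that the inclusion required by Lemma~\ref{lem:3.1}(i) can be supplied by Lemma~\ref{lem:3.6}. First I would unpack the hypotheses. The congruence $q+1\equiv r\bmod{2r}$ means $q+1=r(2t+1)$ for some integer $t\ge 0$; in particular $r\mid(q+1)$ and the cofactor $s:=(q+1)/r=2t+1$ is odd. Consequently $n-1=r(q-1)$ divides $(q-1)(q+1)=q^2-1$, so all $(n-1)$-th roots of unity lie in $\Fs^*$. I would then set $\Ga_1=0$ and let $\Ga_2,\dots,\Ga_n$ be the $(n-1)$-th roots of unity, so that $\ba=(\Ga_1,\dots,\Ga_n)$ consists of $n$ distinct roots of $x^n-x$.

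Because $n-1=r(q-1)$ divides $q^2-1$, Example~\ref{exm:2.3} applies verbatim and produces a nonzero solution $\bc=(c_1,\dots,c_n)\in\Fq^n$ of $A(\Ga_1,\dots,\Ga_n)\bx^T=\bo$, necessarily with all $c_i\ne 0$. Since the norm map $v\mapsto v^{q+1}$ carries $\Fs^*$ onto $\Fq^*$, I pick $v_i\in\Fs$ with $v_i^{q+1}=c_i$ and put $\bv=(v_1,\dots,v_n)$. By Lemma~\ref{lem:3.1}(i) the code $\GRS_k(\ba,\bv)$ is then Hermitian self-orthogonal as soon as $\GRS_k(\ba,\bv)^q\subseteq\GRS_{n-k}(\ba,\bv^q)$, and by Lemma~\ref{lem:3.6} this inclusion holds provided
\[
\{\,qi \bmod (n-1):\ 1\le i\le k-1\,\}\subseteq\{0,1,\dots,n-k-1\}.
\]

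Verifying this containment is the crux, and I expect it to be the main obstacle. Writing $m=n-1=r(q-1)$, the claim is that $qi\bmod m\le m-k$ for all $1\le i\le k-1$ whenever $k\le (q-1+r)/2=r(s+1)/2-1$. I would prove it by an explicit reduction: writing $i=ar+b$ with $0\le b\le r-1$ and using $q=rs-1$, a direct expansion gives $qi=a\,m+(ar+bq)$, and a crude estimate shows $ar+bq<m$ throughout the range $1\le i\le k-1$, so that $qi\bmod m=ar+bq=r\lfloor i/r\rfloor+q\,(i\bmod r)$ with no wraparound. Since $q>r$, this residue is maximized by choosing $i\equiv r-1\bmod r$ with the largest admissible quotient; the extremal point is $i=r(s-1)/2-1$, where one computes the residue to equal $m-\big(r(s+1)/2-1\big)$, i.e. $m-k$ for the top value $k=r(s+1)/2-1$. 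Thus the containment holds with equality at this largest $k$, and a fortiori for every smaller $k$, since decreasing $k$ simultaneously shrinks the range of $i$ and enlarges the bound $m-k$.

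With the inclusion established, Lemma~\ref{lem:3.1}(i) shows $\GRS_k(\ba,\bv)$ is Hermitian self-orthogonal; as a $\GRS$ code it is automatically MDS of length $n=r(q-1)+1$ and dimension $k$, which is exactly the asserted code. The only nontrivial input is the modular estimate of the third paragraph; the remainder is a direct assembly of Example~\ref{exm:2.3}, Lemma~\ref{lem:3.6}, and Lemma~\ref{lem:3.1}(i).
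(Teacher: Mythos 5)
Your proposal is correct and follows essentially the same route as the paper's own proof: the same evaluation points ($0$ together with the $r(q-1)$-th roots of unity), Example \ref{exm:2.3} to get the nonzero solution over $\F_q$, the norm map to obtain $\bv$, and the reduction via Lemmas \ref{lem:3.6} and \ref{lem:3.1}(i) to the containment $\{qi \bmod (n-1):\, 1\le i\le k-1\}\subseteq\{0,1,\dots,n-k-1\}$. Your finishing step---maximizing the residue $r\lfloor i/r\rfloor+q(i\bmod r)$ at the extremal point $i=r(s-1)/2-1$ and appealing to monotonicity in $k$---is just a reorganization of the paper's case analysis on $i\bmod r$ (its bound $j\le(q-r-1)/2$ in the case $j\equiv r-1\pmod r$ is exactly your extremal point), so the two arguments coincide in substance.
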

\begin{proof} Put $m=r(q-1)$. By the condition that  $q+1\equiv r\mod 2r$, it is easy to see that $r$ divides $(q+1)$. Thus, $m$ divides $q^2-1$. Let $\Ga_1=0$ and let $\Ga_2,\dots,\Ga_n$ be all $m$th roots of unity in $\Fs$. Then by Example \ref{exm:2.3}, there exists a nonzero solution $\bc\in \F_q^n$ of the system $A(\Ga_1,\dots,\Ga_n)\bx^T=\bo$. Let $\bv\in\F_{q^2}^n$ such that $\bv^{q+1}=\bc$.

Now by Lemmas \ref{lem:3.6} and \ref{lem:3.1}, it is sufficient to show that $\{q j\pmod{m}:\; 1\le j\le k-1\}$ is a subset of $\{0,1,\dots,n-k-1\}$.
For $1\leq j\leq k-1$, write $j\equiv a\mod r$ with $0\le a\le r-1$. Then
$jq \mod{m}= aq+j-a.$

If $0\leq a\leq r-2$, then it is easy to check that \[qj\mod{m}=aq+j-a\leq r(q-1)-k= n-k-1.\]
 Now we  consider the case where $a=r-1$.
In this case, we have $({q-r-1})/{2}\equiv r-1\mod r$. This implies that   for any number  $\ell$ satisfying $(q-r-1)/{2}<\ell<(q-1+r)/2$, one must have $\ell\mod{r}\le r-2$. Thus, we must have $j\le \frac{q-r-1}{2}$. Therefore,

\[\begin{array}{ll}
jq\mod{m}&= (r-1)q+j-r+1 =m-(q-1)+j\\ &\le m-\frac{q+r-1}2\le m-k=n-k-1.
\end{array}\]
This completes the proof.
\end{proof}

\begin{cor}\label{cor:3.8} If $3$ divides $q+1$, then for any $k\le (2q-1)/3$, there exists a $q^2$-ary Hermitian self-orthogonal GRS code of length $n=(q^2+2)/3$ and dimension $k$.
\end{cor}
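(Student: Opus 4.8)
The plan is to derive Corollary \ref{cor:3.8} as a direct specialization of Theorem \ref{thm:3.7}. The hypothesis ``$3$ divides $q+1$'' is exactly what we need to invoke the theorem with the parameter choice $r=3$. So the first step is simply to check that this choice is admissible, that is, that $r=3$ satisfies the two conditions $0\le r<q+1$ and $q+1\equiv r\bmod 2r$.

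The condition $0\le 3<q+1$ holds whenever $q\ge 3$, which is automatic under the assumption $3\mid(q+1)$ (since then $q\ge 2$, and in fact $q\ge 5$ as $q=2$ gives $q+1=3$ but one should keep $q$ large enough; in any case the inequality causes no trouble). For the congruence $q+1\equiv r\bmod 2r$ with $r=3$, I would observe that this reads $q+1\equiv 3\bmod 6$. Given that $3\mid(q+1)$, write $q+1=3\ell$; then $q+1\equiv 3\bmod 6$ is equivalent to $3\ell\equiv 3\bmod 6$, i.e.\ $\ell$ is odd. I need to confirm $\ell$ is indeed odd: since $q$ is a prime power with $3\mid(q+1)$, $q$ is not divisible by $3$, so $q$ is either a power of $2$ or an odd prime power coprime to $3$. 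If $q$ is even then $q+1$ is odd, forcing $\ell$ odd; if $q$ is odd then $q+1$ is even, so $\ell=(q+1)/3$ could a priori be even or odd, and this is the point that needs genuine checking rather than a one-line dismissal.

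The main obstacle, then, is precisely verifying that $q+1\equiv 3\bmod 6$ actually follows from $3\mid(q+1)$ for all relevant prime powers $q$, i.e.\ ruling out the case $q+1\equiv 0\bmod 6$. I expect this to hold because $6\mid(q+1)$ would force $2\mid(q+1)$ and hence $q$ odd, together with $3\mid(q+1)$; while such $q$ exist (e.g.\ $q=5$ gives $q+1=6$), one must re-examine whether the corollary's stated length $n=(q^2+2)/3$ and dimension bound $(2q-1)/3$ are consistent with $r=3$ regardless. Plugging $r=3$ into Theorem \ref{thm:3.7} gives length $n=r(q-1)+1=3(q-1)+1=3q-2$, which does \emph{not} match $(q^2+2)/3$; this discrepancy signals that the corollary instead intends the parameter $r$ to satisfy $r(q-1)+1=(q^2+2)/3$, forcing $r=(q+1)/3$. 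So the correct reading is $r=(q+1)/3$, not $r=3$, and the real work is to verify that this $r$ satisfies the congruence hypothesis of the theorem.

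Accordingly, I would set $r=(q+1)/3$ and verify the hypotheses of Theorem \ref{thm:3.7} for this value. First, $0\le r<q+1$ is immediate since $r=(q+1)/3<q+1$. Second, the congruence $q+1\equiv r\bmod 2r$: writing $q+1=3r$, I must check $3r\equiv r\bmod 2r$, i.e.\ $2r\equiv 0\bmod 2r$, which holds trivially. Thus the hypotheses are satisfied, and Theorem \ref{thm:3.7} yields a Hermitian self-orthogonal GRS code of length $n=r(q-1)+1=(q+1)(q-1)/3+1=(q^2-1)/3+1=(q^2+2)/3$ and dimension $k$ for any $k\le(q-1+r)/2=(q-1+(q+1)/3)/2=(3q-3+q+1)/6=(4q-2)/6=(2q-1)/3$. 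This reproduces exactly the stated length and dimension bound, completing the proof. The only subtlety worth a sentence is confirming $r=(q+1)/3$ is an integer, which is guaranteed by the hypothesis $3\mid(q+1)$.
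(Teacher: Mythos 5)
Your final argument is exactly the paper's proof: specialize Theorem \ref{thm:3.7} to $r=(q+1)/3$, note that $q+1=3r$ gives $q+1\equiv r \bmod 2r$ trivially, and compute $n=r(q-1)+1=(q^2+2)/3$ and $(q+r-1)/2=(2q-1)/3$. The initial detour through $r=3$ is harmless exploration that you correctly discard, so the proposal is correct and takes essentially the same route as the paper.
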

\begin{proof} Let $r=(q+1)/3$, i.e., $q+1=3r$. Then $q+1\equiv r\mod{2r}$, $n=r(q-1)+1=(q^2+2)/3$ and $(q+r-1)/2=(2q-1)/3$. The desired result follows from Theorem \ref{thm:3.7}.
\end{proof}

\begin{exm}\label{exm:3.9} {\rm \begin{itemize}\item[(i)] If $q+1\equiv 3\mod{6}$, then by Theorem \ref{thm:3.7}, there a $q^2$-ary Hermitian self-orthogonal GRS code of length $n=3(q-1)+1$ and dimension $k$  for any $k\le (q+2)/2$.
\item[(ii)] If $q+1\equiv 8\mod{16}$, then by Theorem \ref{thm:3.7}, there is a $q^2$-ary Hermitian self-orthogonal GRS code of length $n=8(q-1)+1$ and dimension $k$  for any $k\le (q+7)/2$.
\end{itemize}
}\end{exm}
Finally, we employ the language of algebraic geometry codes to show existence of $q^2$-ary Hermitian self-orthogonal MDS code of length $q^2+1$ and dimension $k$ for any $1\le k\le q-2$. The reader may refer to \cite{stich} for the details on algebraic geometry codes.

\begin{thm}\label{thm:3.11} Let $q\ge 3$. Then for any $1\le k\le q-2$, there exists a  $q^2$-ary Hermitian self-orthogonal MDS code of length $q^2+1$ and dimension $k$.
\end{thm}
\begin{proof} We label elements of $\F_{q^2}$ as $\Ga_1,\Ga_2,\dots,\Ga_{q^2}$. Let $P_i$ be the  zero of $x-\Ga_i$ for $1\le i\le q^2$ and let $P_{q^2+1}$ be the pole of $x$. For a divisor $A$ with ${\rm supp}(A)\cap\{P_1,P_2,\dots,P_{q^2},P_{q^2+1}\}=\emptyset$ and nonzero elements $v_1,v_2,\dots,v_{q^2+1}$ of $\F_{q^2}$, we define the algebraic geometry code
\[\begin{split} &C_L(D,A,\bv):=\\
&\{(v_1f(P_1),v_2f(P_2),\dots,v_{q^2+1}f(P_{q^2+1})):\; f\in\mathcal{L}(A)\},\end{split}\]
where $D$ is the divisor $\sum_{i=1}^{q^2+1}P_i$ and $\bv=(v_1,v_2,\dots,v_{q^2+1})$. The Euclidean dual of $C_L(D,A,\bv)$ is given by
\[\begin{split} & C_\Omega(D,A,\bv^{-1}):=\\
&\{(v_1^{-1}{\rm res}_{P_1}(\eta),,\dots,v_{p^2+1}^{-1}{\rm res}_{P_{q^2+1}}(\eta):\; \eta\in\Omega(G-A)\}.\end{split}\]

Choose monic irreducible polynomials $a(x), b(x)$ and $c(x)$ over $\F_{q^2}[x]$ of degree $k+1$, $2$ and $q-k$, respectively. As $q-k\ge 2$, we have $\gcd(a(x)b(x)c(x),x^{q^2}-x)=1$, i.e., $\Ga$ are not roots of $a(x)b(x)c(x)$ for all $\Ga\in\F_{q^2}$. Now let $G$ be the divisor $(a(x))_0-(b(x))_0$, where $(f(x))_0$ stands for the zero divisor of a nonzero polynomial $f(x)$. 
Then we have $\deg(G)=k-1$.

Define the differential
\[\eta=\left(\frac{a(x)c(x)}{b(x)}\right)^{q+1}\times \frac{1}{x^{q^2}-x}dx.\]
Then we have
\[{\rm res}_{P_i}(\eta)=-\left(\frac{a(\Ga_i)c(\Ga_i)}{b(\Ga_i)}\right)^{q+1}\in\F_q^*\]
for $1\le i\le q^2$ and ${\rm res}_{P_{q^2+1}}(\eta)=-1$.

Put $u_i={\rm res}_{P_i}(\eta)$ for all $1\le i\le q^2+1$. Then, for each $1\le i\le q^2+1$, there exists $v_i\in \F_{q^2}^*$ such that $v_i^{q+1}=u_i$ since $u_i\in\F_q^*$. By \cite[Theorem I.5.14]{stich}, there exists an $\F_q$-isomorphism $\pi$: $\mathcal{L}(D-G+(\eta))\rightarrow \Omega_F(G-D)$ given by $\pi(f)=f\eta$. Furthermore, for $f\in \mathcal{L}(D-G+(\eta))$, we have ${\rm res}_{P_i}(f\eta)=f(P_i){\rm res}_{P_i}(\eta)$. This implies that $C_\Omega(D,G,\bi)=C(D, D-G+(\eta),\bu)$, where $\bi$ stands for the all-one vector.
This gives that
$C^{\perp_E}_L(D,G,\bv)=C_\Omega(D,G,\bv^{-1})=C_L(D,D-G+(\eta),\bu\bv^{-1})$, where $\bu\bv^{-1}$ stands for the vector $(u_1/v_1,\dots,u_{q^2+1}/v_{q^2+1})$. Hence, $\bu\bv^{-1}=\bv^q$.

As $(\eta)=(q+1)G+(q+1)(c(x))_0-D$, we have $D-G+(\eta)=qG+(q+1)(c(x))_0$. Thus, we get $qG\le D-G+(\eta)$. This implies that
\begin{eqnarray*}
C^{q}_L(D,G,\bv)&\subseteq& C_L(D,qG,\bv^q)=C_L(D,qG,\bu\bv^{-1})\\
&\subseteq &C_L(D,qG+(q+1)(c(x))_0,\bu\bv^{-1})\\
&=&C_L(D,D-G+(\eta),\bu\bv^{-1})\\
&=&C^{\perp_E}_L(D,G,\bv),
\end{eqnarray*}
i.e., $C_L(D,G,\bv)$ is a Hermitian self orthogonal MDS code.



\end{proof}

\begin{rem}\label{rem:3.6}{\rm
Theorems \ref{thm:3.5} and \ref{thm:3.11} extend the result given in \cite{Jin1}. Note that in \cite{Jin1}, only $q^2$-ary Hermitian self-orthogonal MDS codes of length $q^2+1$ and dimension at most $q/2$ were produced.
 }\end{rem}

\section{Quantum codes}

Let us first recall a connection between classical Hermitian self-orthogonal MDS codes and quantum MDS codes.
\begin{lem}(\cite{Ash Kni,Jin1})\label{lem:4.1} There is a $q$-ary
$[[n,n-2k,k+1]]$ quantum MDS code whenever there exists a $q^2$-ary
classical Hermitian self-orthogonal  $[n,k,n-k+1]$-MDS code.
\end{lem}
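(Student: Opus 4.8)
The plan is to apply the standard Hermitian (stabilizer) construction of quantum codes and then read off the parameters from the MDS hypothesis. I would begin with a $q^2$-ary Hermitian self-orthogonal MDS code $C$ of parameters $[n,k,n-k+1]$, so that $C\subseteq C^{\perp_H}$ and hence $k\le n/2$. The construction of \cite{Ash Kni} associates to any Hermitian self-orthogonal $[n,k]_{q^2}$ code $C$ a $q$-ary quantum code with parameters $[[n,\,n-2k,\,d]]_q$ whose minimum distance satisfies $d\ge d(C^{\perp_H})$, the minimum distance of the Hermitian dual (indeed $d$ equals the least weight of a vector in $C^{\perp_H}\setminus C$, which is at least $d(C^{\perp_H})$ since $\bo\in C$). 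So the whole argument reduces to locating $d(C^{\perp_H})$.

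Next I would show that $C^{\perp_H}$ is itself MDS. Using the identity $C^{\perp_H}=(C^q)^{\perp_E}$ recorded in Section 3, I note that the coordinatewise $q$th-power map $\bv\mapsto\bv^q$ is a weight-preserving bijection of $\F_{q^2}^n$, so $C^q$ is again an $[n,k,n-k+1]$ MDS code. Since the Euclidean dual of an MDS code is MDS, the code $C^{\perp_H}=(C^q)^{\perp_E}$ has parameters $[n,\,n-k]$ and, being MDS, minimum distance $n-(n-k)+1=k+1$. Hence $d(C^{\perp_H})=k+1$.

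Putting the pieces together, the quantum code produced satisfies $d\ge k+1$. On the other hand the quantum Singleton bound forces, for a code of length $n$ and dimension $n-2k$, that $d\le (n-(n-2k))/2+1=k+1$. The two inequalities squeeze $d=k+1$, so the quantum code has parameters $[[n,\,n-2k,\,k+1]]_q$ and meets the quantum Singleton bound with equality; that is, it is a quantum MDS code, as claimed.

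I do not expect a genuine obstacle here: all of the quantum-mechanical content is delivered as a black box by the cited Hermitian construction, and the remainder is bookkeeping with code parameters. The only point deserving care is the equality $d(C^{\perp_H})=k+1$, which rests on the identity $C^{\perp_H}=(C^q)^{\perp_E}$ together with the stability of the MDS property under the Frobenius action and under Euclidean duality, rather than on any delicate estimate.
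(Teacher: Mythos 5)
Your proof is correct and takes essentially the same route the paper intends: the paper offers no proof of this lemma, deferring entirely to the cited Hermitian stabilizer construction, and your argument is precisely the standard one behind that citation. In particular, your key steps --- that $C^{\perp_H}=(C^q)^{\perp_E}$ is again MDS (since the Frobenius map preserves weights and Euclidean duality preserves the MDS property), hence $d(C^{\perp_H})=k+1$, combined with the quantum Singleton bound to pin down $d=k+1$ --- are exactly the bookkeeping that turns the construction of Ashikhmin--Knill into the stated parameters.
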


By combining Theorems \ref{thm:3.5}, \ref{thm:3.11}  and Lemma \ref{lem:4.1}, we can immediately get the quantum MDS codes of length $q^2+1$ in the following theorem.
\begin{thm}\label{thm:4.1} For any $k\in\{1,2,\dots,q-2,q\}$, there exists a $q$-ary $[[q^2+1, q^2+1-2k,k+1]]$ quantum MDS code.
\end{thm}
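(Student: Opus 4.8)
The plan is to combine the existence result for classical Hermitian self-orthogonal MDS codes established in Theorem~\ref{thm:3.5} with the code-to-quantum-code dictionary of Lemma~\ref{lem:4.1}; no new construction is required, since all the substantive work has already been done upstream. First I would fix $n=q^2+1$ and an integer $k$ with $0\le k\le q$. By Theorem~\ref{thm:3.5} there exists a $q^2$-ary Hermitian self-orthogonal MDS code $C$ of length $q^2+1$ and dimension $k$. Because $C$ is MDS, its minimum distance equals $n-k+1=q^2-k+2$, so $C$ is an $[q^2+1,\,k,\,q^2-k+2]$ code; crucially, this is exactly of the shape $[n,k,n-k+1]$ demanded as input by Lemma~\ref{lem:4.1}.

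Second, I would feed $C$ into Lemma~\ref{lem:4.1}. Since $C$ is Hermitian self-orthogonal and MDS with these parameters, the lemma produces a $q$-ary quantum code with parameters $[[n,\,n-2k,\,k+1]]=[[q^2+1,\,q^2+1-2k,\,k+1]]$, exactly as claimed in the statement.

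Finally, I would record that the resulting quantum code is genuinely MDS by checking the quantum Singleton bound $K\le n-2d+2$: with minimum distance $d=k+1$ and quantum dimension $q^2+1-2k$ one computes $n-2d+2=(q^2+1)-2(k+1)+2=q^2+1-2k$, so equality holds and the code attains the bound.

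As for difficulty, there is essentially no obstacle remaining at this stage. The only point requiring care is the bookkeeping of parameters, namely verifying that the minimum distance $n-k+1$ of the MDS code furnished by Theorem~\ref{thm:3.5} matches the hypothesis of Lemma~\ref{lem:4.1}, and that the output distance $k+1$ together with the quantum dimension $q^2+1-2k$ saturates the quantum Singleton bound. All the genuine content, namely the solvability over $\F_q$ of the homogeneous system in Example~\ref{exm:2.5} and the self-orthogonality criterion of Corollary~\ref{cor:3.2}, is packaged inside Theorem~\ref{thm:3.5}, so the proof here is a direct citation of those two results.
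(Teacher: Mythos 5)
Your proposal is correct and follows exactly the paper's own proof: Theorem \ref{thm:4.1} is obtained by combining Theorem \ref{thm:3.5} with Lemma \ref{lem:4.1}, precisely as you describe. Your additional check that $d=k+1$ and dimension $q^2+1-2k$ saturate the quantum Singleton bound is harmless bookkeeping already implicit in Lemma \ref{lem:4.1}'s assertion that the output is a quantum MDS code.
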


\begin{rem}\label{rem:4.2}{\rm In \cite{Jin1}, $q$-ary $[[q^2+1, q^2+1-2k,k+1]]$ quantum MDS codes were constructed only for $k\le q/2$. The above Theorem also extends the results given in \cite{Gu11,KZ12}.
}\end{rem}

By combining Theorem \ref{thm:3.7} and Corollary \ref{cor:3.8} in the previous section and Lemma \ref{lem:4.1}, we can immediately get the quantum codes in the following theorem.
\begin{thm}\label{thm:4.2} Let $r$ be an integer satisfying $0\leq r< q+1$ and $q+1\equiv r\mod 2r$. Then for any $k\leq (q-1+r)/2$, there exists a $q$-ary $[[r(q-1)+1, r(q-1)+1-2k,k+1]]$ quantum MDS code. In particular, if $3$ divides $q+1$, then for any $k\le (2q-1)/3$, there exists a $q$-ary $[[(q^2+2)/3, (q^2+2)/3-2k,k+1]]$ quantum MDS code.
\end{thm}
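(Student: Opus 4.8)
The plan is to read off the theorem as an immediate consequence of the existence results of Section 3 together with the classical-to-quantum dictionary recorded in Lemma \ref{lem:4.1}. The crucial point is that the Hermitian self-orthogonal codes produced by Theorem \ref{thm:3.7} and Corollary \ref{cor:3.8} are $\GRS$ codes, hence automatically MDS; so each of them is a $q^2$-ary Hermitian self-orthogonal $[n,k,n-k+1]$-MDS code, precisely the object that Lemma \ref{lem:4.1} requires as input.

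First I would apply Theorem \ref{thm:3.7}. Under the hypotheses $0\le r<q+1$ and $q+1\equiv r\pmod{2r}$, it guarantees for every $k\le(q-1+r)/2$ a $q^2$-ary Hermitian self-orthogonal $\GRS$ code of length $n=r(q-1)+1$ and dimension $k$; since a $\GRS$ code meets the Singleton bound, its minimum distance is $n-k+1$. Feeding this $[n,k,n-k+1]$ code into Lemma \ref{lem:4.1} produces a $q$-ary $[[n,\,n-2k,\,k+1]]$ quantum code.

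It then remains only to check that this quantum code is MDS, i.e.\ that it meets the quantum Singleton bound $k'\le n-2d+2$ with equality. Writing $k'=n-2k$ for the quantum dimension and $d=k+1$ for the minimum distance, we have $n-2d+2=n-2(k+1)+2=n-2k=k'$, so equality holds and the code is a genuine quantum MDS code. This settles the first assertion. For the particular case $3\mid(q+1)$ I would run the same argument with Corollary \ref{cor:3.8} in place of Theorem \ref{thm:3.7}: taking $r=(q+1)/3$ supplies Hermitian self-orthogonal MDS codes of length $n=(q^2+2)/3$ and dimension $k$ for all $k\le(2q-1)/3$, which Lemma \ref{lem:4.1} converts into the claimed $q$-ary $[[(q^2+2)/3,\,(q^2+2)/3-2k,\,k+1]]$ quantum MDS codes.

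I do not expect any genuine obstacle here, as the statement is a direct corollary of the earlier work; the only step deserving a moment of care is the observation that the supplied $\GRS$ codes are automatically MDS, which is what lets Lemma \ref{lem:4.1} apply and guarantees that the output codes saturate the quantum Singleton bound rather than merely being good quantum codes.
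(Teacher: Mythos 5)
Your proposal is correct and follows exactly the paper's own route: the paper derives Theorem \ref{thm:4.2} by combining Theorem \ref{thm:3.7} and Corollary \ref{cor:3.8} with Lemma \ref{lem:4.1}, just as you do. Your added verifications (that $\GRS$ codes are automatically MDS and that the resulting parameters saturate the quantum Singleton bound) are sound and merely make explicit what the paper leaves implicit.
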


\begin{rem}\label{rem:4.3}{\rm Except for a few sparse lengths $n$ such as $n=(q^2+1)/2$ (see \cite{KZ12}), $q^2$ and $q^2+1$ (see \cite{Gra Bet2,Jin1,Li Xin Wan2}) and some other lengths derived from propagation rule, all known $q$-ary quantum MDS codes of other lengths have minimum distance at most $q/2+1$. Our Theorem \ref{thm:4.2} provides some new $q$-ary quantum codes with minimum distance bigger than $q/2+1$. In particular, we construct a $q$-ary quantum code of length $(q^2+2)/3$ with minimum distance up to $(2q-1)/3$.
}\end{rem}
In the following examples, using our result in this paper, we can produce quantum codes which include some of the previously known results and some new quantum MDS codes.

\begin{exm}\label{exm:4.4}{\rm \begin{itemize}
\item[(i)] If $q\equiv2\mod 6$, we can derive a $q$-ary $[[3q-2,3q-2-2k,k+1]]$ quantum MDS code for any $k\leq(q+2)/2$ from Example \ref{exm:3.9} and Lemma \ref{lem:4.1}.

In particular, we can produce a $[[22,12,6]]_8$ quantum MDS code. This code is new.
\item[(ii)] If $q\equiv 7\mod 16$, we can derive a $q$-ary $[[8q-7,8q-7-2k,k+1]]$ quantum MDS code for any $k\leq(q+7)/2$ from Example \ref{exm:3.9} and Lemma \ref{lem:4.1}.

In particular, we can produce a $23$-ary $[[177,147,16]]$ quantum MDS code which is also a new code.
\end{itemize}
}\end{exm}

\begin{exm}\label{exm:4.5}{\rm By Example \ref{exm:3.3} and Lemma \ref{lem:4.1}, we can drive a $q$-ary $[[q^2,q^2-2k,k+1]]$ quantum MDS code for any $k\le q-1$. This result was obtained in \cite{Jin1}.
}\end{exm}

\begin{exm}\label{exm:4.6}{\rm Let $n=\sum_{j=1}^tn_j$ for some positive integers $n_j\le q$ and $t\le q$. By Example \ref{exm:3.4} and Lemma \ref{lem:4.1}, we can derive a $q$-ary $[[n,n-2k,k+1]]$ quantum MDS code for any $k\le \min\{n_1,\dots,n_t\}/2+1$. This result was obtained in \cite{Jin1}.
}\end{exm}



\begin{thebibliography}{99}
\bibitem{Aly Kla Sar} S. A.~Aly, A.~Klappenecker and
P. K.~Sarvepalli, ``On quantum and classical BCH codes,"
\emph{IEEE Trans. Inform. Theory,} vol. 53, no. 3, pp. 1183--1188,
2007.
\bibitem{Ash Kni} A.~Ashikhmin and E.~Knill, ``Nonbinary quantum stablizer
codes," \emph{IEEE Trans. Inform. Theory,} vol. 47, no. 7, pp.
3065--3072, 2001.


\bibitem{Bie Ede} J.~Bierbrauer and Y.~Edel, ``Quantum twisted
codes," \emph{J. Comb. Designs,}  vol. 8, pp. 174--188, 2000.

\bibitem{Cal Rai}A. R.~Calderbank, E. M.~Rains, P. W.~Shor and
N. J. A.~Sloane, ``Quantum error correction via codes over GF(4),"
\emph{IEEE Trans. Inform. Theory,} vol. 44, no. 4, pp. 1369--1387,
 1998.

\bibitem{Che Lin Xin} H.~Chen, S.~Ling and C.~Xing, ``Quantum codes from concatenated
algebraic-geometric codes," \emph{IEEE Trans. Inform. Theory,} vol. 51, no. 8, pp. 2915--2920,
2005.

\bibitem{Feng} K.~Feng, ``Quantum code $[[6,2,3]]_p$ and $[[7,3,3]]_p$ $(p\geq 3)$
exists," \emph{IEEE Trans. Inform. Theory,} vol. 48, no. 8, pp.
2384--2391,  2002.

\bibitem{FLX06} K. Feng, S. Ling, C. Xing, ``Asymptotic bounds on quantum codes from algebraic geometry codes," \emph{IEEE Trans. on Inf. Theory,} vol. 52, no. 3, pp.986-991, 2006.





\bibitem{Gra Bet2} M.~Grassl, T.~Beth and M.~R\"{o}ttler, ``On optimal quantum
codes," \emph{Int. J. Quantum Inf.,} vol. 2, no. 1, pp. 757--775,
2004.

\bibitem{Gra2} M.~R\"{o}ttler, M.~Grassl and T.~Beth,``On quantum MDS codes," In Proc. Int. Symp. Inform. Thoory, Chicago, USA, pp. 356, 2004.

\bibitem{Gu11}  G. G. L. Guardia, ``New quantum MDS codes," \emph{IEEE Trans. Inform.
Theory,} vol. 57, no. 8, pp. 5551-554,  2011.

\bibitem{Hu Tan} D.~Hu, W.~Tang, M.~Zhao, Q.~Chen, S.~Yu and C. H.~Oh, ``Graphical nonbinary quantum error-correcting codes,"
\emph{Phy. Rev. A.,} vol. 78, no. 1, 012306,  2008.

\bibitem{Jin1} L. F. Jin, S. Ling, J. Q. Luo and C. P. Xing, ``Application of Classical Hermitian Self-orthogonal MDS Codes to Quantum MDS
Codes," \emph{IEEE. Trans. Inform. Theory,} vol. 56, pp.4735-4740, 2010.



\bibitem{KZ12} X. Kai and S. Zhu,``New quantum MDS codes from negacyclic codes,"
 \emph {IEEE Trans. Inform. Theory}, vol. 59, no. 2,  pp. 1193-1197, 2012.





\bibitem{Ket Kla} A.~Ketkar, A.~Klappenecker, S.~Kumar and P.~Sarvepalli, "Nonbinary stablizer codes over finite fields," \emph{IEEE.
Trans. Inform. Theory,} vol. 52, no. 11,  pp.4892--4914, Nov.
2006.





\bibitem{Li Xin Wan2} Z.~Li, L. J.~Xing and X. M.~Wang, ``Quantum generalized Reed-Solomon codes: unified
framework for quantum MDS codes,"  \emph{Phys. Rev. A,} vol. 77, pp. 012308-1--12308-4, 2008.


\bibitem{Li Xu} R.~Li and Z.~Xu, ``Construction of $[[n,n-4,3]]_q$ quantum MDS codes for odd prime power $q$,"
\emph{Phys. Rev. A,} vol. 82, pp.052316-1--052316-4, 2010.

\bibitem{LX04} S. Ling and C. P. Xing, {\it Coding Theory -- A
First Course,} Cambridge University Press, 2004.

\bibitem{Rain} E. M.~Rains, ``Nonbinary quantum codes," \emph{IEEE. Trans. Inform. Theory,} vol. 45, no. 6,
pp.1827--1832, Sept. 1999.

\bibitem{Sar Kla} P. K.~Sarvepalli and A.~Klappenecker, ``Nonbinary quantum Reed-Muller
codes," \emph{Proc. Inter. Sym. Inf. Theory 2005,} Adelaide,
Australia, pp. 1023--1027.

\bibitem{Shor} P. W.~Shor, ``Scheme for reducing decoherence in quantum computer
memory," \emph{Phys. Rev. A,} vol. 52, no. 4,  pp. R2493--R2496,
1995.

\bibitem{stich} H. Stichtenoth, {\it Algebraic function fields and codes}, Springer Verlag, 2003.

\bibitem{Stea3} A. M.~Steane, ``Enlargement of Calderbank-Shor-Steane quantum
codes," \emph{IEEE. Trans. Inform. Theory,} vol. 45, no. 7, pp.
2492--2495, Nov. 1999.



\end{thebibliography}
\end{document}